\newcommand{\F}{\mathcal{F}}
\newcommand{\np}{\mathsf{NP}}
\newcommand{\pair}[2]{(#1,#2)}
\newcommand{\prob}[3]{
\begin{center}\fbox{\parbox{0.9\linewidth}{
#1\\
\begin{tabularx}{\linewidth}{rX}
\textbf{Input:} & #2\\
\textbf{Question:} & #3
\end{tabularx}
}}
\end{center}
}
\begin{document}

\title{A faster FPT algorithm for Bipartite Contraction\thanks{Research  supported in part by the European Research Council (ERC)  grant 
``PARAMTIGHT: Parameterized complexity and the search for tight
complexity results,'' reference 280152 and OTKA grant NK105645.}}
\author{Sylvain Guillemot and D\'{a}niel Marx}
\institute{Institute for Computer Science and Control, Hungarian Academy of Sciences\\
(MTA SZTAKI), Budapest, Hungary}
\maketitle{}

\begin{abstract} The \textsc{Bipartite Contraction} problem is to
  decide, given a graph $G$ and a parameter $k$, whether we can can
  obtain a bipartite graph from $G$ by at most $k$ edge
  contractions. The fixed-parameter tractability of the problem was
  shown by Heggernes et al. \cite{HHLP11}, with an algorithm whose
  running time has double-exponential dependence on $k$.  We present a
  new randomized FPT algorithm for the problem, which is both
  conceptually simpler and achieves an improved $2^{O(k^2)} n m$
  running time, i.e., avoiding the double-exponential dependence on $k$. The
  algorithm can be derandomized using standard techniques.
\end{abstract}

\section{Introduction}\label{sec:introduction}

A graph modification problem aims at transforming an input graph into
a graph satisfying a certain property, by at most $k$
operations. These problems are typically studied from the viewpoint of
fixed-parameter tractability, where the goal is to obtain an algorithm
with running time $f(k) n^c$ (or \emph{FPT algorithm}). Here, $f(k)$
is a computable function depending only on the parameter $k$, which
confines the combinatorial explosion that is seemingly inevitable for
an $\np$-hard problem. The most intensively studied graph modification
problems involve vertex- or edge-deletions as their base operation;
fixed-parameter tractability has been established for the problems of
transforming a graph into a forest \cite{GGHNW06,DFLRS07,CFLLV08}, a
bipartite graph \cite{RSV04,GGHNW06,LSS09,KR10,DBLP:journals/corr/abs-1203-0833,DBLP:journals/corr/abs-1304-7505,DBLP:journals/corr/IwataOY13}, a chordal
graph \cite{M10,MR2000c:68068}, a planar graph \cite{MS12,DBLP:conf/focs/Kawarabayashi09}, a unit/proper interval graph \cite{DBLP:conf/wg/BevernKMN10,DBLP:journals/algorithmica/HofV13}, or an interval graph
\cite{DBLP:journals/siamcomp/VillangerHPT09,DBLP:journals/corr/abs-1211-5933}. Results
have also been obtained for problems involving directed graphs
\cite{CLLSR08} or group-labeled graphs \cite{G11,CPP12}.

Recently, there has been an interest in graph modification problems involving edge contractions. These problems fall in the following general framework. Given a graph property $\Pi$, the problem \textsc{$\Pi$-Contraction} is to decide, for a graph $G$ and a parameter $k$, whether we can obtain a graph in $\Pi$, by starting from $G$ and performing at most $k$ edge contractions. For each graph property $\Pi$ admitting a polynomial recognition algorithm, it is then natural to ask whether \textsc{$\Pi$-Contraction} admits an FPT algorithm. Such algorithms have been given when $\Pi$ is the class of paths, the class of trees \cite{HHLLP11}, the class of planar graphs \cite{GHP12}, or the class of bipartite graphs \cite{HHLP11}.

For the case of bipartite graphs, the problem is called \textsc{Bipartite Contraction}, and Heggernes et al.~\cite{HHLP11} obtained an FPT algorithm with a running time double-exponential in $k$. The algorithm combines several tools from parameterized algorithmics, such as the irrelevant vertex technique, important separators, treewidth reduction, and Courcelle's theorem. In this note, we present a new FPT algorithm for the problem, which is both conceptually simpler and faster. Similar to the compression routine for \textsc{Odd Cycle Transversal} in \cite{RSV04}, we reduce \textsc{Bipartite Contraction} to several instances of an auxiliary cut problem. Our main effort is spent on obtaining an FPT algorithm for this cut problem. This is achieved by using the notion of important separators from \cite{M06}, together with the randomized coloring technique introduced by Alon et al~\cite{AYZ95}. We obtain the following result:

\begin{theorem} \label{thm:algo-bc} \textsc{Bipartite Contraction} has
  a randomized \textup{FPT} algorithm with running time $2^{O(k^2)}
  n m$ and a deterministic algorithm with running time $2^{O(k^2)}
  n^{O(1)}$.
\end{theorem}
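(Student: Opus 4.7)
The plan follows the strategy sketched in the introduction: a Reed--Smith--Vetta style compression reduces \textsc{Bipartite Contraction} to an auxiliary cut problem, which is then attacked using important separators combined with color coding.

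The first step is a structural reformulation. A bipartite contraction of at most $k$ edges in $G$ is equivalent to a pair $(\chi, S)$ consisting of a $2$-coloring $\chi : V \to \{0,1\}$ and a set $S \subseteq V$ such that every edge of $G$ with at least one endpoint outside $S$ is bichromatic under $\chi$, and the monochromatic edges inside $G[S]$ span $S$ by a forest of at most $k$ edges; in particular $|S| \le 2k$ and $S$ is a vertex odd cycle transversal of $G$. Using iterative compression for vertex OCT, I would start by computing some OCT $S^*$ of size at most $2k$ together with a proper $2$-coloring of $G - S^*$; the true target $S$ need not equal $S^*$, but $S^*$ can be used to anchor the search.

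With $S^*$ fixed, I would enumerate all $2^{O(k)}$ possible values of $\chi|_{S^*}$. Each choice turns the remaining problem into the announced auxiliary cut problem: decide, for every connected component of $G - S^*$, which of its two proper $2$-colorings to adopt (the ``flip pattern''), and select a small set of additional vertices to place into $S$ in order to eliminate the residual monochromatic cross-edges. The conflicts to be resolved are carried by odd paths between pairs of vertices forced to receive inconsistent colors, so finding a minimum repair vertex set is the kind of small-terminal separator problem for which the theory of important separators from~\cite{M06} yields only $2^{O(k)}$ canonical candidate solutions to enumerate.

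The step I expect to be the main obstacle is verifying that a chosen repair yields a valid witness: the updated set $S$ must admit a partition into at most $k$ connected monochromatic bags whose spanning forests realize the at most $k$ contractions. I would handle this with the color-coding technique of~\cite{AYZ95}, assigning random colors from a palette of size $O(k^2)$ so that the at most $\binom{2k}{2}$ monochromatic edges of the hypothesized $S$ receive all-distinct colors with probability $2^{-O(k^2)}$, and then reconstructing the bag partition by a dynamic program over colored edge subsets in time $2^{O(k^2)}$. Multiplying the iterative-compression loop, the $2^{O(k)}$ guesses on $\chi|_{S^*}$, the $2^{O(k)}$ candidate separators, and the $2^{O(k^2)}$ color-coding work gives the claimed $2^{O(k^2)}nm$ randomized bound; standard derandomization using perfect hash families produces the deterministic $2^{O(k^2)}n^{O(1)}$ variant.
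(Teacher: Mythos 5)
Your high-level skeleton matches the paper's: find an odd cycle transversal $S^*$ of size at most $2k$, enumerate the $2^{O(k)}$ colorings of $S^*$, reduce to an auxiliary cut problem, and attack that problem with important separators plus color coding, derandomizing via splitters. However, the two steps that carry the actual technical weight are wrong or missing. First, you claim that the residual problem is ``the kind of small-terminal separator problem for which the theory of important separators yields only $2^{O(k)}$ canonical candidate solutions to enumerate.'' That is not how important separators can be used here. The objective of the auxiliary problem is not to minimize the number of cut edges or separator vertices but to bound the \emph{rank} of the cut (the size of a spanning forest of the edges removed); a cut of rank at most $k$ may contain $\Theta(k^2)$ edges, and there can be exponentially many minimal cuts of small rank, none of which need be important separators. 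The paper's actual use of important separators is much weaker and is the correct statement: the union of all important separators of size at most $6k$ toward each single vertex defines a set $E_{\textup{rel}}$ of relevant edges of bounded degree $d=2^{O(k)}$ that \emph{contains} every minimal solution (Lemma~\ref{lem:relevant-edges}); it does not enumerate candidate solutions.

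Second, your color-coding step does not do the job it needs to do. Giving the $O(k^2)$ monochromatic edges of the hypothesized witness pairwise distinct colors and then running ``a dynamic program over colored edge subsets'' provides no mechanism for certifying the global condition that the chosen edge set is actually an $(X,Y)$-cut; that condition cannot be checked locally color class by color class. The paper instead uses a random \emph{two}-coloring of $E_{\textup{rel}}$ (black with probability $1/(6kd)$) whose purpose is to make a spanning forest of the solution black while all relevant edges leaving the solution's components are red; then the black connected components exactly carve out the components of the solution, and the remaining ``constrained'' problem --- choose a subfamily of prescribed connected blocks whose total rank is at most $k$ and whose edges form a cut --- reduces to a polynomial-time minimum weighted vertex separator computation (Lemma~\ref{lem:algo-constrained-rank-cut}). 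Without the bounded-degree relevant-edge set, the red event cannot be bounded below by a constant, and without the reduction to the polynomially solvable constrained problem, the randomization achieves nothing. These two ideas are the core of the proof and are absent from your proposal.
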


This paper is organized as follows. We first introduce the relevant notation and definitions in Section \ref{sec:preliminaries}. We explain in Section \ref{sec:reduction-rank-cut} how \textsc{Bipartite Contraction} can be reduced to several instances of a suitable cut problem called \textsc{Rank-Cut}. In Section \ref{sec:constrained-rank-cut}, we define a constrained version of the \textsc{Rank-Cut} problem and show that it is polynomial-time solvable. In Section \ref{sec:reduction-constrained-rank-cut}, we present a randomized reduction of \textsc{Rank-Cut} to its constrained version. Finally, in Section \ref{sec:derandomization} we derandomize this reduction and we complete the proof of Theorem \ref{thm:algo-bc}. Section \ref{sec:conclusion} concludes the paper.

\section{Preliminaries}\label{sec:preliminaries}

Given a graph $G$, we let $V(G)$ and $E(G)$ denote its vertex set and
edge set, respectively. We let $n = |V(G)|$ and $m = |E(G)|$.  Given
$X \subseteq V(G)$, we denote by $G[X]$ the subgraph of $G$ induced
by $X$, and we denote by $G \setminus X$ the subgraph of $G$ induced
by $V(G) \setminus X$. Given a set $F\subseteq E(G)$ of edges, we denote
by $V(F)$ the endpoints of the edges in $F$, and we say that $F$ \emph{spans}
the vertices in $V(F)$. Given $F \subseteq E(G)$, we denote by $G[F]$ the graph with
vertex set $V(F)$ and edge set $F$; we denote by $G \setminus F$ the graph with
vertex set $V(G)$ and edge set $E(G) \setminus F$. For an edge $e$, we denote by $G/e$
the graph obtained by contracting edge $e$, that is, by removing the
endpoints $u$ and $v$ of $e$ and introducing a new vertex that is adjacent to
every vertex that is adjacent to at least one of $u$ or $v$. Given $F
\subseteq E(G)$, we denote by $G / F$ the graph obtained from $G$ by
contracting all the edges of $F$; it is easy to observe that the graph
$G/F$ does not depend on the order in which we perform the
contractions.

Fix two disjoint subsets of vertices $X,Y$ of a graph $G$. An \emph{$\pair{X}{Y}$-walk} in $G$
is a sequence $W = v_0 v_1 \ldots v_{\ell}$ of vertices such that $v_0 \in X, v_{\ell} \in Y$, and
$v_i v_{i+1} \in E(G)$ for $0 \leq i < \ell$; the \emph{length} of $W$ is $\ell$, and we call $W$
an \emph{$\pair{X}{Y}$-path} in $G$ if the vertices $v_i$ are pairwise distinct. We will simply
use the term ``path'' when the sets $X,Y$ are irrelevant. An \emph{$\pair{X}{Y}$-cut} in $G$ is
defined as a set $C \subseteq E(G)$ such that $G \setminus C$ has no $\pair{X}{Y}$-path; an
\emph{$\pair{X}{Y}$-separator} in $G$ is defined as a set $S \subseteq V(G)
\setminus (X \cup Y)$ such that $G \setminus S$ has no $\pair{X}{Y}$-path. Note that
the $\pair{X}{Y}$-separator is by definition disjoint from $X$ and $Y$. An
$\pair{X}{Y}$-cut (resp.~$\pair{X}{Y}$-separator) $C$ is \emph{inclusion-wise minimal} if no
proper subset of $C$ is an $\pair{X}{Y}$-cut (resp.~$\pair{X}{Y}$-separator).

A \emph{bipartite modulator} of $G$ is a set $F \subseteq E(G)$ such
that $G \setminus F$ is bipartite. The \emph{rank} of a graph $G$ is the number of edges
of a spanning forest of $G$. The rank of a set
$F\subseteq E(G)$ of edges, denoted by $r(F)$ is the rank of $G[F]$. As observed in \cite{MSR12}, we can
alternatively define \textsc{Bipartite Contraction} as the following
problem: given a graph $G$ and an integer $k$, find a bipartite
modulator $F$ of $G$ such that $r(F) \leq k$. We reproduce the proof
here for completeness.

\begin{lemma} \label{lem:equivalence-bc} The following statements are equivalent:
\begin{enumerate}
\item[(i)] there exists a set $F \subseteq E(G)$ such that $|F| \leq k$ and $G / F$ is bipartite; 
\item[(ii)] there exists a set $F \subseteq E(G)$ such that $r(F) \leq k$ and $G \setminus F$ is bipartite.
\end{enumerate}
\end{lemma}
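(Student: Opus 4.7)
My plan is to prove both directions constructively by transferring a 2-coloring between the two viewpoints and using it to read off a suitable set of edges as the witness.

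For (i)$\Rightarrow$(ii), starting from $F$ with $|F|\leq k$ and $G/F$ bipartite, I would pick any proper 2-coloring $\psi$ of $G/F$, lift it to $\phi:V(G)\to\{0,1\}$ by $\phi(v):=\psi([v])$, where $[v]$ denotes the super-vertex of $v$ in $G/F$, and let $F^*:=\{uv\in E(G):\phi(u)=\phi(v)\}$. Because $\psi$ is proper on $G/F$, the only monochromatic edges of $G$ are those whose endpoints lie in the same connected component of $G[F]$; hence $F\subseteq F^*$ and the enlargement only adds edges internal to existing components of $G[F]$. Thus $V(F^*)=V(F)$ and $c(G[F^*])=c(G[F])$, giving $r(F^*)=r(F)\leq|F|\leq k$; and by construction $\phi$ is a proper 2-coloring of $G\setminus F^*$, so $G\setminus F^*$ is bipartite.

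For (ii)$\Rightarrow$(i), starting from $F$ with $r(F)\leq k$ and $G\setminus F$ bipartite, I would fix any proper 2-coloring $\phi$ of $G\setminus F$ (defined on all of $V(G)$, since $G\setminus F$ is spanning), and form the partition $\mathcal{P}$ of $V(G)$ whose blocks are the connected components of the induced subgraphs $G[\phi^{-1}(0)]$ and $G[\phi^{-1}(1)]$. Each block is $\phi$-monochromatic and $G$-connected by construction, so I take $F^*$ to consist of a spanning tree of each block inside $G$. Assigning each super-vertex of $G/F^*$ the common $\phi$-value of its constituents then yields a proper 2-coloring: an edge between two distinct super-vertices of the same color would be $\phi$-monochromatic, hence contained in some $G[\phi^{-1}(c)]$, hence trapped in a single block, a contradiction.

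The main obstacle I expect is bounding $|F^*|$ by $k$ in the second direction. Writing $M_\phi$ for the set of $\phi$-monochromatic edges of $G$, a block of $\mathcal{P}$ is precisely a connected component of the spanning subgraph $(V(G),M_\phi)$, so $|F^*|=|V(G)|-|\mathcal{P}|$ coincides with $r(M_\phi)$ under the paper's rank convention. Because $\phi$ was chosen proper on $G\setminus F$, every monochromatic edge must lie in $F$, so $M_\phi\subseteq F$, and monotonicity of the rank function (easily verified edge by edge from the paper's definition) yields $r(M_\phi)\leq r(F)\leq k$, closing the argument.
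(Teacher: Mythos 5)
Your proof is correct and takes essentially the same approach as the paper: both directions transfer a proper 2-coloring between $G/F$ and $G\setminus F$ via the connected components of the relevant (monochromatic, resp.\ contracted) edge set, contracting or deleting a spanning forest of those components. The only cosmetic difference is in (ii)$\Rightarrow$(i), where the paper assumes $F$ minimal so that $F$ coincides with the set of monochromatic edges, while you work directly with $M_\phi\subseteq F$ and invoke rank monotonicity.
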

\begin{proof} $(i) \Rightarrow (ii)$: Let $F'$ denote the edges of $G$ having both endpoints in a same connected component of $G[F]$. Observe that $r(F') \leq k$, as $F$ contains a spanning forest of $G[F']$. We claim that $G \setminus F'$ is bipartite. Observe that the vertex set of each connected component of $G[F]$ is an independent set in $G \setminus F'$. Therefore, a proper 2-coloring of $G / F$ can be turned into a proper 2-coloring of $G \setminus F'$ if we color every vertex in a connected component $K$ of $G[F]$ by the color of the single vertex corresponding to $K$ in $G / F$.

$(ii) \Rightarrow (i)$: Let us fix a proper 2-coloring of $G \setminus F$. We can assume that $F$ is a minimal set of edges such that $G \setminus F$ is bipartite. Therefore, in each connected component of $G[F]$, every vertex has the same color in the 2-coloring of $G \setminus F$. Hence, contracting each connected component of $F$ to a single vertex gives a bipartite graph. This graph can be obtained by contracting the edges of a spanning forest of $F$, which has $r(F) \leq k$ edges. \qed
\end{proof}

One can deduce from Lemma \ref{lem:equivalence-bc} that \textsc{Bipartite Contraction} is a monotone problem in the sense that positive instances are preserved by taking (not necessarily induced) subgraphs. Such a monotonicity property is typically required when the problem is solved using the iterative compression technique (see, e.g., \cite{GGHNW06,HHLP11}). As we shall see in Section \ref{sec:reduction-rank-cut}, even though we solve the problem by ``compressing'' from a set of vertices $X$ whose deletion makes the graph bipartite, we do not use \emph{iterative} compression as we perform a single compression step from a set $X$ obtained with a black-box algorithm. Hence, we do not explicitly need the monotonicity property in this paper.

\section{Reduction to a cut problem}\label{sec:reduction-rank-cut}

We first define a \emph{compression} version of the problem, named \textsc{Bipartite Contraction Compression}: given a graph $G$, an integer $k$, and a set $X \subseteq V(G)$ with $|X|\le 2k$ such that $G \setminus X$ is bipartite, is there a bipartite modulator $F$ of $G$ with $r(F) \leq k$? The following lemma establishes how a \emph{compression routine} for the problem entails the fixed-parameter tractability of \textsc{Bipartite Contraction}.

\begin{lemma} \label{lem:reduction-bcc} Suppose that \textsc{Bipartite Contraction Compression} is solvable in time $T(k,n,m)$. Then \textsc{Bipartite Contraction} is solvable in time $O(9^k k n m + T(k,n,m))$.
\end{lemma}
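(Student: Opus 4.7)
The plan is to reduce to the compression version by finding a small odd cycle transversal of $G$ and passing it as the hint $X$. The key observation is that any bipartite modulator $F$ with $r(F) \le k$ automatically yields such a transversal, namely $V(F)$, and this is what lets us search for $X$ with parameter $2k$.

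\textbf{Key bound.} I would first establish that if $F \subseteq E(G)$ is a bipartite modulator with $r(F) \le k$, then $|V(F)| \le 2k$ and $G \setminus V(F)$ is bipartite. Writing $c$ for the number of connected components of $G[F]$, we have $r(F) = |V(F)| - c$. Since every component of $G[F]$ contains at least one edge, it has at least two vertices, giving $|V(F)| \ge 2c$. Combining these, $|V(F)| \le 2r(F) \le 2k$. Bipartiteness of $G \setminus V(F)$ is immediate because it is an induced subgraph of the bipartite graph $G \setminus F$.

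\textbf{Algorithm.} Run the Reed--Smith--Vetta iterative compression algorithm for \textsc{Odd Cycle Transversal} with parameter $2k$, whose running time is $O(3^{2k} \cdot (2k) \cdot nm) = O(9^k k n m)$. This either returns a set $X \subseteq V(G)$ with $|X| \le 2k$ and $G \setminus X$ bipartite, or certifies that no such set exists. In the latter case, the key bound above together with Lemma \ref{lem:equivalence-bc} guarantees that $(G,k)$ is a NO-instance of \textsc{Bipartite Contraction}, so we output NO. In the former case, invoke the \textsc{Bipartite Contraction Compression} routine on $(G,k,X)$ in time $T(k,n,m)$ and return its answer; this is correct because the compression routine decides exactly whether $G$ admits a bipartite modulator of rank at most $k$, independently of the specific hint $X$ (which only needs to be a valid $\le 2k$-sized odd cycle transversal).

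\textbf{Complexity and main difficulty.} Summing the two contributions gives the claimed $O(9^k k n m + T(k,n,m))$ bound; the factor $9^k = 3^{2k}$ arises precisely from running the $3^k$-time OCT routine with the doubled parameter $2k$. The reduction itself is routine, and I do not anticipate any real obstacle beyond the rank-to-vertex inequality $|V(F)| \le 2 r(F)$, which is the only substantive observation and which dictates the base of the exponential in the running time.
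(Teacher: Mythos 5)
Your proposal is correct and follows essentially the same route as the paper: run the Reed--Smith--Vetta \textsc{Odd Cycle Transversal} algorithm with parameter $2k$, answer ``no'' if it fails, and otherwise hand the resulting set $X$ to the compression routine. The only difference is cosmetic: you spell out the inequality $|V(F)|\le 2r(F)$ via the component count, whereas the paper just asserts that $V(F)$ has size at most $2k$ for any solution $F$; both justifications are fine.
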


\begin{proof} An instance $I = (G,k)$ of \textsc{Bipartite
    Compression} is solved the following way. First, we run the
  algorithm of Reed et al.~\cite{RSV04} to look for a set $X \subseteq
  V(G)$ of size $\leq 2k$ such that $G \setminus X$ is bipartite; the running
  time of the algorithm is $O(3^{2k} k nm)$ (see also
  \cite{LSS09})\footnote{Very recently, two linear-time algorithms for \textsc{Odd Cycle Transversal} were announced \cite{DBLP:journals/corr/abs-1304-7505,DBLP:journals/corr/IwataOY13}. Also, the dependence on $k$ was improved from $3^k$ to $2.3146^k$ \cite{DBLP:journals/corr/abs-1203-0833}. However, using any of these algorithms here would not improve the overall asymptotic running time of our algorithm, as the dominating term comes from the \textsc{Rank-Cut} algorithm of Theorem~\ref{thm:rank-cut-deterministic}. }. If there is no such set, then we answer
  ``no''. Otherwise, we run the algorithm for \textsc{Bipartite
    Contraction Compression} on $(G,k,X)$. This takes time $O(9^k k n m +
  T(k,n,m))$ as claimed. The correctness of this algorithm
  follows by observing that, if $F$ is a solution for instance $I$ of
  \textsc{Bipartite Contraction}, then $X = V(F)$ is a set of size at
  most $2k$ such that $G \setminus X$ is bipartite. \qed
\end{proof}

In the rest of this section, we concentrate on the \textsc{Bipartite Contraction Compression} problem. We solve the problem similarly to the compression routine for \textsc{Odd Cycle Transversal} \cite{RSV04}. First, we adapt the construction of \cite{RSV04} to the case of edge sets.

Suppose that we are given a graph $G$ in which a bipartite modulator has to be found,
along with a set $X \subseteq V(G)$ such that $G \setminus X$ is bipartite.
We construct a graph $G'$ as follows. Let $S_1, S_2$ be a bipartition of $G \setminus X$,
and let $<$ be an arbitrary total ordering of $V(G)$. We let $V(G') = (V(G) \setminus X)
\cup X'$ with $X' = \{ x_1,x_2 : x \in X \}$, and $E(G') = E(G \setminus X) \cup \{ u v_{3-i} :
uv \in E, u \in S_i, v \in X \} \cup \{ u_1 v_2 : uv \in E, u,v \in X, u < v \}$. Observe that $G'$
is bipartite, with bipartition $S'_1 = S_1 \cup \{ x_1 : x \in X \}$ and $S'_2 = S_2 \cup
\{ x_2 : x \in X \}$. Furthermore, if we identify $x_1$ with $x_2$ for every $x\in X$
in $G'$, then we get the graph $G$; in particular, $G$ and $G'$ have
the same number of edges.

Define a bijection $\Phi : E(G) \rightarrow E(G')$ which preserves
each edge of $E(G \setminus X)$, maps each edge $uv \in E(G)$ ($u \in S_i, v \in
X$) to the edge $u v_{3-i}$, and maps each edge $uv \in E(G)$ ($u,v \in
X, u < v$) to the edge $u_1 v_2$.  We say that a partition of $X'$
into two sets $X'_A,X'_B$ is \emph{valid} if for each $x \in X$,
exactly one of $\{x_1,x_2\}$ is in $X'_A$.  The following lemma is
similar to Lemma 1 of \cite{RSV04}.

\begin{lemma} \label{lem:equivalence-RSV} For every $F \subseteq E(G)$, the following statements are equivalent:
\begin{enumerate}
\item[(i)] $G \setminus F$ is bipartite,
\item[(ii)] There is a valid partition $X'_A,X'_B$ of $X'$ such that $\Phi(F)$ is an $\pair{X'_A}{X'_B}$-cut in $G'$.
\end{enumerate}
\end{lemma}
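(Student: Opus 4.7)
My plan is to anchor the proof on the canonical $2$-coloring $c' : V(G') \to \{0,1\}$ given by the bipartition of $G'$ (so $c'(v) = 0$ iff $v \in S'_1$; in particular $c'(x_j) = j - 1$ for $x \in X$), and to compare it with a putative proper $2$-coloring $c$ of $G \setminus F$ via the walk-parity constraint that bipartiteness imposes on each side.

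For $(i) \Rightarrow (ii)$, I would fix a proper $2$-coloring $c$ of $G \setminus F$ and set
\[
X'_A = \{ x_j \in X' : c(x) = c'(x_j) \}, \qquad X'_B = X' \setminus X'_A.
\]
Validity is immediate since for each $x \in X$ exactly one of $x_1, x_2$ satisfies the defining equation. To show $\Phi(F)$ is an $\pair{X'_A}{X'_B}$-cut, suppose for contradiction that a $\pair{X'_A}{X'_B}$-path $w_0 w_1 \ldots w_\ell$ survives in $G' \setminus \Phi(F)$. Identifying $x_1$ with $x_2$ for each $x \in X$ projects this to a walk $w'_0 w'_1 \ldots w'_\ell$ in $G \setminus F$ (since $\Phi$ is a bijection on edges). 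Bipartiteness of $G'$ and of $G \setminus F$ gives $c'(w_0) \oplus c'(w_\ell) \equiv \ell \equiv c(w'_0) \oplus c(w'_\ell) \pmod{2}$, hence $c(w'_0) \oplus c'(w_0) = c(w'_\ell) \oplus c'(w_\ell)$; but $w_0 \in X'_A$ forces the left-hand side to be $0$ while $w_\ell \in X'_B$ forces the right-hand side to be $1$, a contradiction.

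For $(ii) \Rightarrow (i)$, every proper $2$-coloring of $G' \setminus \Phi(F)$ is obtained from $c'$ by flipping the colors of selected connected components. Since $\Phi(F)$ is an $\pair{X'_A}{X'_B}$-cut, no component meets both $X'_A$ and $X'_B$, so I can assign a flip bit $b(K) \in \{0,1\}$ per component $K$ by setting $b(K) = 0$ when $K$ meets $X'_A$, $b(K) = 1$ when $K$ meets $X'_B$, and arbitrarily otherwise. I would then define $c(v) = c'(v) \oplus b(K(v))$ for $v \in V(G) \setminus X$ and $c(x) = c'(x_j)$ where $x_j$ is the unique element of $\{x_1, x_2\}$ in $X'_A$, and verify properness on the three edge-types: edges inside $V(G) \setminus X$ are handled by the bipartiteness of $G'$ on their common component, and an edge $xy$ inside $X$ with $x < y$ maps to $x_1 y_2$ whose endpoints must lie on the same side by the cut property, yielding $c(x) \neq c(y)$ from the definition. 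The main obstacle will be the mixed edges $ux$ (with $u \in S_i$, $x \in X$): the image $u x_{3-i}$ puts $u$ and $x_{3-i}$ into the same component $K$, and one must chase whether $x_{3-i}$ is the designated copy in $X'_A$ or the other copy in $X'_B$, use this to pin down $b(K)$, and combine with $c'(u) \neq c'(x_{3-i})$ to obtain $c(u) \neq c(x)$ in both subcases. This matching between flip bits and partition sides is where the essential bookkeeping lives.
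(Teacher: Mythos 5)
Your proposal is correct and follows essentially the same route as the paper: the forward direction is the paper's parity argument on a surviving path (you phrase it uniformly with XOR where the paper does an explicit case split on the endpoint indices), and the backward direction's component flip bits relative to the canonical coloring $c'$ are exactly the paper's reachability-based coloring in disguise. The case analysis you flag for the mixed edges $ux$ does go through in both subcases, so there is no gap.
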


\begin{proof} $(i) \Rightarrow (ii)$. Suppose that $G \setminus F$ is
  bipartite with bipartition $V_1,V_2$. Define the partition
  $X'_A,X'_B$ of $X'$ such that: for $u \in X$, $u_1 \in X'_A$ iff $u
  \in V_1$. Observe that $X'_A,X'_B$ is a valid partition of $X'$. We
  claim that $C = \Phi(F)$ is an $\pair{X'_A}{X'_B}$-cut in $G'$.
  Towards a contradiction, assume that $G' \setminus C$ contains an
  $X'_A,X'_B$-path $P'$. Suppose that the endpoints of $P'$ are $u_i \in
  X'_A, v_j \in X'_B$; then $u \in V_i, v \in V_{3-j}$. The path $P'$
  corresponds to an $u,v$-path $P$ in $G \setminus F$, of the same
  length.  If $j = i$, then $u_i,v_j$ both belong to $S'_i$, and we
  obtain that $P$ is a path of even length between $V_i$ and
  $V_{3-i}$, contradiction. If $j = 3-i$, then $u_i \in S'_i$, $v_j
  \in S'_{3-i}$, and we obtain that $P$ is a path of odd length
  between $V_i$ and $V_i$, contradiction. We conclude that $C$ is an
  $\pair{X'_A}{X'_B}$-cut in $G'$, as claimed.

  $(ii) \Rightarrow (i)$. Suppose that $C \subseteq E(G')$ is an
  $\pair{X'_A}{X'_B}$-cut in $G'$, for some valid partition $X'_A,X'_B$ of
  $X'$. We claim that $F = \Phi^{-1}(C)$ is such that $G \setminus
  F$ is bipartite. We define a 2-coloring $\chi$ of $G \setminus F$
  as follows: (1) If $u \in X$, then $\chi(u) = 1$ iff $u_1 \in X'_A$;
  (2) If $u \in V \setminus X$ is reachable from $X'_A$ in $G'
  \setminus C$, then $\chi(u) = 1$ iff $u \in S_1$; (3) If $u \in V
  \setminus X$ is not reachable from $X'_A$ in $G' \setminus C$,
  then $\chi(u) = 1$ iff $u \in S_2$. We verify that $\chi$ is a
  proper 2-coloring of $G \setminus F$. Consider an edge $uv \in E(G)
  \setminus F$, there are three cases. If $u,v \notin X$, then $u \in
  S_i, v \in S_{3-i}$; as $u,v$ are either both in case (2) or both in
  case (3), it follows that $\chi(u) \neq \chi(v)$. If $u \in S_i, v
  \in X$, then $u v_{3-i}$ is an edge of $G' \setminus C$; if
  $v_{3-i} \in X'_A$ then $\chi(v) = 3-i$ by (1), and $\chi(u) = i$ by
  (2); if $v_{3-i} \in X'_B$ then $\chi(v) = i$ by (1), and $\chi(u) =
  3-i$ by (3). If $u,v \in X$ with $u < v$, then $u_1 v_2$ is an edge
  of $G' \setminus C$, and thus we have $u_1,v_2$ both in $X'_A$ or
  both in $X'_B$, which implies that $\chi(u) \neq \chi(v)$. We
  conclude that $G \setminus F$ is bipartite, as claimed. \qed
\end{proof}

Lemma~\ref{lem:equivalence-RSV} turns the problem of finding a
bipartite modulator into several instances of a cut problem (one for
each valid partition).  The same way as Lemma~\ref{lem:equivalence-bc}
shows the equivalence of \textsc{Bipartite Contraction} with the problem of finding a bipartite modulator with a rank
constraint, we show that Lemma~\ref{lem:equivalence-RSV} allows us to
solve \textsc{Bipartite Contraction Compression} by solving a cut
problem with a rank constraint. However, there is a technical detail
related to the fact that two vertices $x_1,x_2\in X'$ correspond to
each vertex $x\in X$ in the construction of $G'$; we need the
following definition to deal with this issue.
Let $M,F \subseteq E(G)$ be two subsets of edges. We define the \emph{$M$-rank} $r_{M}(F)$ of $F$ as the rank of the graph $G[F \cup M] / M$. Our auxiliary problem is defined as follows.
\medskip

\prob{\textsc{Rank-Cut}}{A graph $G$, an integer $k$, two sets $X,Y \subseteq V(G)$, and a set $M\subseteq E(G)$ with $|M| \leq 2k$}{Is there an $\pair{X}{Y}$-cut $C$ in $G$ such that $r_{M}(C) \leq k$?}
\medskip

The following simple observation will be useful later:
\begin{lemma}\label{lem:solspan}
If $|M|\le 2k$ and $r_M(C)\le k$, then $C\cup M$ spans at most $6k$ vertices.
\end{lemma}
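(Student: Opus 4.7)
The plan is to reduce the claim to a clean identity between graph ranks and then apply a simple degree argument. Specifically, I would first establish the identity $r_M(C) = r(C\cup M) - r(M)$, where $r(\cdot)$ denotes ordinary graph rank. Writing rank as ``number of vertices minus number of connected components,'' this identity is obtained by observing that contracting $M$ inside $G[C\cup M]$ merges each connected component of $G[M]$ (which together span exactly $V(M)$) into a single vertex, while not changing the overall number of connected components. Comparing vertex counts and component counts between $G[C\cup M]$ and $G[C\cup M]/M$ and cancelling then yields the identity. The only subtlety is to check that performing the contraction inside $G[C\cup M]$ agrees with the definition of $r_M$ given globally, which follows immediately from $V(M)\subseteq V(C\cup M)$.

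Given this identity, the bound on $|V(C\cup M)|$ is a short counting argument. From $r(M)\le |M|\le 2k$ and the hypothesis $r_M(C)\le k$, I obtain $r(C\cup M)\le 3k$. Let $V^{\star}=V(C\cup M)$. Every vertex of $V^{\star}$ is, by definition, incident to some edge of $C\cup M$, so no connected component of $G[C\cup M]$ is a singleton, and hence the number of connected components of $G[C\cup M]$ is at most $|V^{\star}|/2$. Substituting into $r(C\cup M)=|V^{\star}|-(\text{number of components of }G[C\cup M])$ gives $|V^{\star}|\le 3k + |V^{\star}|/2$, whence $|V^{\star}|\le 6k$.

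There is no genuine obstacle here; the proof is essentially bookkeeping with the definitions, and the only conceptual step is the rank identity $r_M(C)=r(C\cup M)-r(M)$. In particular, the specific structure of the \textsc{Rank-Cut} instance (the terminals $X,Y$ and the fact that $C$ is a cut) plays no role beyond the raw bound $r_M(C)\le k$.
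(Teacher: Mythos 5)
Your proof is correct and follows essentially the same route as the paper's: the paper bounds $r(C\cup M)\le r_M(C)+|M|\le 3k$ by noting each contraction decreases rank by at most one, and then uses the no-isolated-vertices observation exactly as you do. Your exact identity $r_M(C)=r(C\cup M)-r(M)$ is just a sharper form of that first step.
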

\begin{proof}
Each contraction can decrease rank by at most one, hence the rank of $G[C\cup M]$ is at most $3k$. As $G[C\cup M]$ has no isolated vertices by definition, it follows that $G[C\cup M]$ has at most $6k$ vertices. \qed
\end{proof}

We now describe how an FPT algorithm for \textsc{Rank-Cut} yields an
FPT algorithm for \textsc{Bipartite Contraction Compression};
Section~\ref{sec:constrained-rank-cut}--\ref{sec:derandomization} show
the fixed-parameter tractability of \textsc{Rank-Cut} itself.
 
\begin{lemma} \label{lem:reduction-rank-cut} Suppose that \textsc{Rank-Cut} is solvable in time $T(k,n,m)$. Then \textsc{Bipartite Contraction Compression} is solvable in $O(4^{k} (T(k,n,m)+n+m))$ time.
\end{lemma}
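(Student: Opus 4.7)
The plan is to reduce \textsc{Bipartite Contraction Compression} to $O(4^k)$ instances of \textsc{Rank-Cut} using the construction behind Lemma~\ref{lem:equivalence-RSV}. Given the input $(G,k,X)$, first build the bipartite graph $G'$ and the bijection $\Phi:E(G)\to E(G')$ as defined before that lemma. Then add a matching $M=\{x_1x_2 : x\in X\}$ to obtain the graph $G'':=(V(G'),E(G')\cup M)$; note that $|M|=|X|\le 2k$, so $M$ is a valid parameter for \textsc{Rank-Cut}. For every valid partition $\pair{X'_A}{X'_B}$ of $X'$ -- and there are $2^{|X|}\le 4^k$ of them, one choice in $\{1,2\}$ per vertex of $X$ -- invoke the \textsc{Rank-Cut} algorithm on $(G'',k,X'_A,X'_B,M)$, answering ``yes'' iff at least one invocation does.

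For correctness I would argue in both directions. Forward: if $F$ is a bipartite modulator of $G$ with $r(F)\le k$, then Lemma~\ref{lem:equivalence-RSV} supplies a valid partition making $\Phi(F)$ an $\pair{X'_A}{X'_B}$-cut in $G'$. Set $C:=\Phi(F)\cup M$; since $G''\setminus C=G'\setminus\Phi(F)$, the set $C$ is an $\pair{X'_A}{X'_B}$-cut in $G''$. As $M\subseteq C$, contracting $M$ in $G''[C\cup M]=G''[C]$ identifies each $x_1$ with $x_2$, and after discarding the self-loops produced by $M$ and any isolated vertices (neither affects rank) we recover exactly $G[F]$, whence $r_M(C)=r(F)\le k$. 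Backward: every $\pair{X'_A}{X'_B}$-cut $C$ in $G''$ must include $M$, because each edge of $M$ runs directly between $X'_A$ and $X'_B$; writing $C=M\cup C'$ with $C'\subseteq E(G')$, the set $C'$ is an $\pair{X'_A}{X'_B}$-cut in $G'$, so by Lemma~\ref{lem:equivalence-RSV} the set $F:=\Phi^{-1}(C')$ is a bipartite modulator of $G$, and $r(F)=r_M(C)\le k$ by the same rank calculation.

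For the running time, building $G'$, $M$, and $G''$ costs $O(n+m)$, and each of the $4^k$ valid partitions triggers one \textsc{Rank-Cut} call on an instance of size $O(n+m)$, running in $T(k,n,m)$; this totals $O(4^k(T(k,n,m)+n+m))$. The point that requires the most care is the rank bookkeeping: the cut $C$ lives in the auxiliary graph $G''$, but the very purpose of the $M$-rank is to undo the splitting of each $x\in X$ into $x_1,x_2$ when evaluating rank, so that $r_M$ on $G''$ agrees with $r$ on the corresponding edge set of $G$.
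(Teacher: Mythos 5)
Your proposal is correct and follows essentially the same route as the paper: build $G'$, add the matching $M=\{x_1x_2: x\in X\}$ (the paper calls the resulting graph $H$), and solve one \textsc{Rank-Cut} instance per valid partition, with the same two-directional correctness argument via Lemma~\ref{lem:equivalence-RSV} and the observation that contracting $M$ in the cut recovers $G[F]$ up to isolated vertices. Your explicit remark about discarding isolated vertices when identifying $H[C]/M$ with $G[F]$ is a small point of care that the paper glosses over.
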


\begin{proof} Consider an instance $I = (G,k,X)$ of \textsc{Bipartite
    Contraction Compression}. From $G$ and $X$, we construct the graph
  $G'$ as described above Lemma~\ref{lem:equivalence-RSV}. We let $H$
  be obtained from $G'$ by adding the edge $x_1x_2$ for every $x\in X$; let $M\subseteq E(H)$ be 
the set of these new edges.

We solve \textsc{Bipartite Contraction Compression} by the following
algorithm. For each valid partition $X'_A,X'_B$ of $X'$, we run the
algorithm for \textsc{Rank-Cut} on the instance $I' =
(H,k,X'_A,X'_B,M)$. We answer ``yes'' if and only if one of the
instances $I'$ was a yes-instance of \textsc{Rank-Cut}. Note that, as
$|X| \leq 2k$ by assumption, we have $|M| \leq 2k$ and thus each
instance $I'$ is a valid instance of \textsc{Rank-Cut}. As there are
$2^{|X|} \leq 4^k$ valid partitions of $X'$, the claimed running time
follows. We show that it correctly solves \textsc{Bipartite
  Contraction Compression}.

Suppose that $I$ admits a solution $F$ with $r(F) \leq k$, then $G
\setminus F$ is bipartite. Thus, by Lemma \ref{lem:equivalence-RSV}
there exists a valid partition $X'_A,X'_B$ of $X'$ such that $\Phi(F)$ is an
$\pair{X'_A}{X'_B}$-cut in $G'$. Hence, $C = \Phi(F) \cup M$ is an
$\pair{X'_A}{X'_B}$-cut in $H$, and since $H[C]/ M$ is isomorphic to $G[F]$,
we have $r_{M}(C) = r(F) \leq k$. It follows that $C$ is a solution
for \textsc{Rank-Cut} on the instance $I' =
(H,X'_A,X'_B,M,k)$. Conversely, suppose that $C$ is a solution of
\textsc{Rank-Cut} on the instance $I' = (H,X'_A,X'_B,M,k)$, for some
valid partition $X'_A,X'_B$ of $X'$. Observe that $M \subseteq C$ (as
each edge of $M$ is between $X'_A$ and $X'_B$), and that $C
\setminus M$ is an $\pair{X'_A}{X'_B}$-cut in $H\setminus M=G'$. Thus, if we
define $F = \Phi^{-1}(C \setminus M)$, we obtain that $G \setminus
F$ is bipartite by Lemma \ref{lem:equivalence-RSV}. Observe that
contracting the edges of $M$ in $H[C]$ gives a graph isomorphic to
$G[F]$. Therefore, $r(F) = r_{M}(C) \leq k$, and we conclude that $F$ is
a solution for the instance $I$. \qed
\end{proof}

\section{Solving a constrained version of \textsc{Rank-Cut}}\label{sec:constrained-rank-cut}

In this section, we introduce a constrained variant of 
\textsc{Rank-Cut}, and show its polynomial-time
solvability. We give  a randomized reduction
of \textsc{Rank-Cut} to this variant in the next section. In the constrained problem, the
cut has to be constructed as the union of disjoint components prescribed in the input:
\medskip

\prob{\textsc{Constrained Rank-Cut}}{A graph $G$, an integer $k$, two subsets $X,Y \subseteq V(G)$, a set $M\subseteq E(G)$, and a partition
$P = (V_1,\ldots,V_{\ell})$ of $V(G)$ such that
\begin{itemize}
\item[(i)]$G[V_i]$ is connected for every $1\le i \le \ell$, and
\item[(ii)] there is no edge of $M$ between $V_i$ and $V_j$ for any $i\neq j$.
\end{itemize}
}{Is there a set $Z \subseteq \{1,\ldots,\ell\}$ such that $C_Z = \cup_{i \in Z} E(G[V_i])$ is an $\pair{X}{Y}$-cut in $G$ with $r_{M}(C_Z) \leq k$?}
\medskip

Note that a $V_i$ can consist of a single vertex, in which case $E(G[V_i])=\emptyset$ and it does not matter if $i$ is in $Z$ or not.
We show that the constrained problem can be reformulated as a weighted separator problem between two sets, and hence can we solved in polynomial time.

\begin{lemma} \label{lem:algo-constrained-rank-cut} \textsc{Constrained Rank-Cut} can be solved in $O(k(n+ m))$ time.
\end{lemma}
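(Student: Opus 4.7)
The plan is to recast \textsc{Constrained Rank-Cut} as a minimum-weight $\pair{X}{Y}$-vertex-cut in an auxiliary graph and solve it by $O(k)$ augmentations of max flow. Using the matroid identity $r_M(F) = r(G[F \cup M]) - r(G[M])$ together with the input hypotheses, I would first establish the ``cluster cost'' formula
\begin{equation*}
r_M(C_Z) = \sum_{i \in Z} w_i, \qquad w_i := (|V_i|-1) - r(G[M_i]),
\end{equation*}
where $M_i := M \cap E(G[V_i])$. Hypothesis (ii) implies that $G[C_Z \cup M]$ decomposes as a disjoint union over $i$ whose $i$-th summand is $G[V_i]$ if $i \in Z$ and $G[M_i]$ otherwise; rank is additive on disjoint unions, and hypothesis (i) gives $r(G[V_i]) = |V_i|-1$. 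Each $w_i$ is a nonnegative integer (equal to $0$ when $|V_i| = 1$), and all weights can be precomputed in total time $O(n+m)$ by spanning-forest passes over the $M_i$'s.

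I would then build an auxiliary graph $H$ with vertex set $V(G) \cup \{c_1, \ldots, c_\ell\}$, whose edges are those edges of $G$ between different $V_i$'s together with a star $\{vc_i : v \in V_i\}$ for each $i$. The key observation is that, since $G[V_i]$ is connected, two vertices of $V_i$ are mutually reachable via internal edges of $V_i$ in $G$ iff they are mutually reachable via $c_i$ in $H$. Consequently, reachability among original vertices in $G \setminus C_Z$ agrees with reachability among them in $H \setminus \{c_i : i \in Z\}$, so $C_Z$ is an $\pair{X}{Y}$-cut in $G$ iff $\{c_i : i \in Z\}$ is an $\pair{X}{Y}$-vertex-separator in $H$. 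Giving $c_i$ weight $w_i$ and every original vertex weight $+\infty$, the problem becomes whether $H$ admits such a vertex-separator of total weight at most $k$.

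Finally, I would use the standard vertex-splitting reduction to an $s$--$t$ edge-capacitated max-flow instance $H'$: replace each $c_i$ by a pair $c_i^-, c_i^+$ joined by an arc of capacity $w_i$, orient each undirected edge of $H$ as two opposite arcs of capacity $k+1$, and add a super-source $s$ with capacity-$(k+1)$ arcs to all of $X$ and a super-sink $t$ reached by capacity-$(k+1)$ arcs from all of $Y$. By max-flow/min-cut the minimum weight of a valid $Z$ equals the maximum $s$--$t$ flow in $H'$; running Ford--Fulkerson with BFS augmentations and stopping as soon as the flow exceeds $k$ uses at most $k+1$ augmentations at $O(n+m)$ each, yielding the claimed $O(k(n+m))$ bound. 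The most delicate step is the cost formula of the first paragraph, where both hypotheses on $P$ are essential and some care is needed to handle singleton clusters; the reduction in the second paragraph is the conceptual trick that turns the nonstandard ``delete only internal edges'' operation into an ordinary vertex deletion, which is what makes a single max-flow call solve the problem.
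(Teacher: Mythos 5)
Your proposal is correct and follows essentially the same route as the paper: replace the internal edges of each cluster $V_i$ by a star through a new weighted vertex, give original vertices infinite weight, and find a minimum-weight $\pair{X}{Y}$-vertex-separator with $O(k)$ Ford--Fulkerson augmentations. The only (harmless) difference is that you compute the cluster weight explicitly as $(|V_i|-1)-r(G[M_i])$ via rank additivity and the contraction identity, whereas the paper assigns the weight $r_{M}(E(G[V_i]))$ directly; these coincide under hypotheses (i) and (ii).
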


\begin{proof} Let $I = (G,k,X,Y,M,P)$ be an instance of \textsc{Constrained Rank-Cut} with $P = (V_1,\ldots,V_{\ell})$. 
 Starting with $G$, we build a weighted graph $G'$ as follows:
\begin{itemize}
\item we remove the edges of $\cup_{i = 1}^{\ell} E(G[V_i])$;
\item we give an infinite weight to the vertices of $V(G)$;
\item for each $1 \leq i \leq \ell$, we add a vertex $v_i$ of weight $r_{M}(E(G[V_i]))$, and we make $v_i$ adjacent to the vertices of $V_i$.
\end{itemize}
We answer ``yes'' if and only if $G'$ has an $\pair{X}{Y}$-separator of
weight at most $k$. We claim that this algorithm takes $O(k(n+ m))$ time.
First, observe that $G'$ has at most $n+m$ edges: for each $i \in \{1,\ldots,\ell\}$,
we replace the edges of $E(G[V_i])$ by a number of edges equal to $|V_i| \leq
|E(G[V_i])|+1$. As we are trying to find an $\pair{X}{Y}$-separator of
weight at most $k$ in $G'$, we can accomplish this by performing at
most $k$ rounds of the Ford-Fulkerson max-flow min-cut algorithm,
giving the running time $O(k(n+m))$.

Given a set $Z \subseteq \{1,\ldots,\ell\}$, let us define edge set $C_Z = \cup_{i \in Z} E(G[V_i])$ and vertex set $S_Z = \{ v_i : i \in Z \}$. The following claim establishes the correctness of the algorithm.

\begin{quote}
{\em Claim.} For any $Z \subseteq \{1,\ldots,\ell\}$,
\begin{enumerate}
\item[(i)] $r_{M}(C_Z)$ equals the weight of $S_Z$;
\item[(ii)] $C_Z$ is an $\pair{X}{Y}$-cut in $G$ iff $S_Z$ is an $\pair{X}{Y}$-separator in $G'$.
\end{enumerate}
\end{quote}

To prove (i), note first that the vertex set of each connected component of $G[C_Z]$ is some $V_i$. Furthermore, as the two endpoints of each edge in $M$ is in the same $V_i$, it is also true in the graph $G[C_Z \cup M]$ that the vertex set of each connected component is some $V_i$. Thus, each connected component of $G[C_Z \cup M] / M$ is obtained from a set $V_i$ by identifying some vertices. We obtain that $r_{M}(C_Z) = \sum_{i \in Z}  r_{M}(E(G[V_i]))$ equals the weight of $S_Z$.

To prove (ii), suppose that $C_Z$ is an $\pair{X}{Y}$-cut in $G$; we need to
show that $S_Z$ is an $\pair{X}{Y}$-separator in $G'$. By way of
contradiction, assume that $G'$ contains an $\pair{X}{Y}$-path $W$ avoiding
$S_Z$. For each segment of $W$ of the form $x v_i y$ with $x,y \in
V_i, i \notin Z$, we replace it by an $x,y$-path in $G[V_i]$ (recall
that the neighbors of $v_i$ are in $V_i$). We obtain an $\pair{X}{Y}$-walk in
$G$ avoiding $C_Z$, a contradiction.

Conversely, suppose that $S_Z$ is an $\pair{X}{Y}$-separator in $G'$,
and let us show that $C_Z$ is an $\pair{X}{Y}$-cut in $G$. By way of
contradiction, assume that $G$ contains an $\pair{X}{Y}$-path $W$
avoiding $C_Z$. Then $W$ can be partitioned as $W_1 W_2 \ldots W_r$,
where each $W_j$ is a maximal subpath of $W$ included in a set $V_i$
(possibly, $W_j$ contains a single vertex). Each $W_j$ that has at
least two vertices is an $x,y$-path included in a component $V_{i}$
with $i \notin Z$; we replace it by a path of the form $x v_{i} y$, to
obtain an $\pair{X}{Y}$-walk in $G'$ avoiding $S_Z$, a
contradiction. \qed
\end{proof}

\section{Reduction to the constrained version}\label{sec:reduction-constrained-rank-cut}

In this section, we describe a randomized reduction mapping an instance $I = (G,k,X,Y,M)$ of \textsc{Rank-Cut} to an instance $I' = (G,k,X,Y,M,P)$ of \textsc{Constrained Rank-Cut}.

The first step of the reduction identifies a set of \emph{relevant
  edges} $E_{\textup{rel}} \subseteq E(G)$ that spans a graph of bounded
degree. It relies on the notion of important separators introduced in
\cite{M06}, which we recall now. Fix two disjoint sets $X,Y \subseteq
V(G)$, and let $S$ be an $\pair{X}{Y}$-separator in $G$. We denote by
$\textup{Reach}_G(X,S)$ the set of vertices of $G$ reachable from $X$ in $G
\setminus S$; note that $\textup{Reach}_G(X,S)$ is disjoint from $Y$. We say
that $S$ is an \emph{important} $\pair{X}{Y}$-separator if (i) $S$ is an
inclusion-wise minimal $\pair{X}{Y}$-separator, (ii) there is no
$\pair{X}{Y}$-separator $S'$ with $|S'| \leq |S|$ and $\textup{Reach}_G(X,S) \subset
\textup{Reach}_G(X,S')$. We have the following result:

\begin{lemma}[\cite{M06,DBLP:journals/algorithmica/ChenLL09,dirmway-journal}]\label{lem:enum-impseps} Let $k$ be a nonnegative
  integer. There are at most $4^k$ important $\pair{X}{Y}$-separators
  of size $\leq k$, and they can be enumerated in time $O(4^k k
  (n+m))$.
\end{lemma}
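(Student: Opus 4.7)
The plan is to enumerate the important $\pair{X}{Y}$-separators by a recursive branching algorithm, and bound the search tree by a well-chosen measure. First I would compute a minimum $\pair{X}{Y}$-separator by max-flow, of size $\lambda$; if $\lambda > k$ there is nothing to enumerate, and if $\lambda = 0$ the only important separator is $\emptyset$. Otherwise I would pick the \emph{farthest} minimum separator $S^*$, i.e.\ the unique minimum separator maximizing $\textup{Reach}_G(X,S)$; its uniqueness and efficient computability follow from submodularity of the cut function and an extra BFS after a max-flow. This $S^*$ is itself one important separator we add to the output.

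Next I would pick an arbitrary vertex $v \in S^*$ and branch on whether an important separator $S$ of size $\le k$ contains $v$. In the branch $v \in S$, I would recurse in $G \setminus \{v\}$ with budget $k-1$ (adjusting $X$ to absorb any neighbors that fall on the source side). In the branch $v \notin S$, the vertex $v$ must lie in $\textup{Reach}_G(X,S)$; using the maximality of $\textup{Reach}_G(X,S^*)$ together with an exchange argument between $S$ and $S^*$, I would show that every important separator not containing $v$ has size strictly greater than $\lambda$, which means that after ``pulling'' $v$ into the source side (for instance by adding $v$ to $X$) the new min-cut value $\lambda'$ satisfies $\lambda' \ge \lambda + 1$. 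In both branches I recurse and collect the important separators found, discarding at the end any returned set that is not actually important in $G$.

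For the analysis, I would use the potential $\mu = 2k - \lambda$. The first branch decreases $k$ by $1$ and decreases $\lambda$ by at most $1$, so $\mu$ drops by at least $1$; the second branch keeps $k$ fixed and increases $\lambda$ by at least $1$, so $\mu$ also drops by at least $1$. Starting from $\mu \le 2k$ and terminating at $\mu < 0$ (infeasible) or $\lambda = 0$ (leaf), the recursion tree has at most $2^{2k} = 4^k$ leaves, giving the bound of $4^k$ important separators. For the running time, each node of the tree requires a single augmentation of the current flow rather than a full max-flow recomputation, so the amortized work per root-to-leaf path is $O(k(n+m))$, leading to an overall $O(4^k k(n+m))$ enumeration.

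The main obstacle, and the only step where I would have to be careful, is proving the strict inequality $\lambda' \ge \lambda + 1$ in the second branch: it requires combining the inclusion-wise maximality of $\textup{Reach}_G(X,S^*)$ with the posimodular/submodular inequality applied to $S$ and $S^*$ to rule out the possibility that a min-cut places $v$ on the source side. Once this strict increase is established, the measure argument and the time bound follow mechanically, and importance of the enumerated separators can be verified in a post-processing step by comparing each candidate $S$ against the unique farthest minimum separator of $(G, X, Y \setminus S)$.
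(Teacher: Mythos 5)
The paper does not prove this lemma; it is imported verbatim from the cited references, and your sketch is essentially the standard argument found there (farthest minimum separator, branch on a vertex of $S^*$, measure $2k-\lambda$), so there is nothing to compare against within the paper itself. Your outline is correct, with one caveat at exactly the step you flag: to get $\lambda' \ge \lambda+1$ in the second branch it is not enough to add only $v$ to $X$ --- a minimum $\pair{X\cup\{v\}}{Y}$-separator of size $\lambda$ need not have $v$ or $\textup{Reach}_G(X,S^*)$ in its reach set, so no contradiction with the maximality of $S^*$ arises directly. The standard fix is to first invoke the pushing lemma ($\textup{Reach}_G(X,S^*)\subseteq \textup{Reach}_G(X,S)$ for every important separator $S$), replace $X$ by $\textup{Reach}_G(X,S^*)$, and only then add $v$; with the source side equal to $\textup{Reach}_G(X,S^*)\cup\{v\}$, a hypothetical minimum separator of size $\lambda$ would have reach strictly containing $\textup{Reach}_G(X,S^*)$, contradicting the choice of $S^*$. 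Also note that your inference runs backwards --- you derive $\lambda'\ge\lambda+1$ from a statement about important separators, whereas $\lambda'$ is a minimum over all separators --- but since you correctly identify that the strict increase must be proved directly from maximality plus submodularity, the sketch is sound once that is done in the right order.
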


We now describe the construction of the set
$E_{\textup{rel}}$. Starting with $G$, we construct a graph $G'$ by
subdividing each edge $e$ with a vertex $z_e$. Given two subsets $X,Y
\subseteq V(G)$, we denote by $C_k(X,Y)$ the union of the important
$\pair{X}{Y}$-separators of size at most $k$ in the extended graph
$G'$. As there are are at most $4^k$ such separators by Lemma~\ref{lem:enum-impseps}, we have $|C_k(X,Y)|\le k\cdot 4^k$. Given a vertex $u \in V(G)$,
we denote by $E(u)$ the set of edges of $G$ incident to $u$. We define
the set $E_{\textup{rel}} \subseteq E(G)$ as follows: (i) for each $u \in V(G)$,
let $E_{\textup{rel}}(u) = \{ e \in E(u) : z_e \in C_{6k}(X,\{u\}) \cup
C_{6k}(Y,\{u\})\}$; (ii) $E_{\textup{rel}}$ consists of those edges $uv
\in E(G)$ such that $uv \in E_{\textup{rel}}(u) \cap
E_{\textup{rel}}(v)$. By Lemma \ref{lem:enum-impseps},
$E_{\textup{rel}}$ can be constructed in time $O(4^{6 k} k \cdot n(n+
m))$, as we need to enumerate important separators for $n$ vertices. Furthermore,
the graph $G[E_{\textup{rel}}]$ has maximum degree $d = 12 k \cdot 4^{6k}$, as each set $E_{\textup{rel}}(u)$ comes from the union of two sets $C_{6k}(X,\{u\})$ and $C_{6k}(Y,\{u\})$, each of which has size at most $6k \cdot 4^{6k}$. The interest of the set $E_{\textup{rel}}$ is that it
contains any minimal solution for $I$.

\begin{lemma} \label{lem:relevant-edges} Any minimal solution $C$ of a
  \textsc{Rank-Cut} instance $I$ is included in $E_{\textup{rel}}$.
\end{lemma}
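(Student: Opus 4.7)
The plan is to show, for any edge $uv$ in a minimal solution $C$, that $z_{uv}$ lies in some important separator of size at most $6k$ attached to $u$ and some such separator attached to $v$; this is exactly what $uv \in E_{\textup{rel}}(u) \cap E_{\textup{rel}}(v) = E_{\textup{rel}}$ requires. Throughout, Lemma~\ref{lem:solspan} supplies the crucial bound $|V(C)| \le |V(C \cup M)| \le 6k$.

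First I would extract the structure of $G \setminus C$ around $uv$. Let $K_u$ and $K_v$ be the components of $u$ and $v$ in $G \setminus C$. Since $r_M$ is monotone under subsets of edges, $C$ is a minimal $\pair{X}{Y}$-cut in the classical sense, so there is an $\pair{X}{Y}$-path in $G \setminus (C \setminus \{uv\})$ traversing $uv$. Splitting this path at $uv$, its two halves avoid $C$ entirely, which forces $K_u \ne K_v$ and shows that one of $K_u, K_v$ contains a vertex of $X$ and the other a vertex of $Y$. After possibly swapping $u$ and $v$, assume $K_u$ contains some $x \in X$ (so $K_u \cap Y = \emptyset$) and $K_v$ contains some $y \in Y$ (so $K_v \cap X = \emptyset$).

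Next I would build a $\pair{Y}{\{u\}}$-separator of $G'$ that contains $z_{uv}$ and keeps $v$ on the $Y$-side: take
\[
S \;=\; \bigl(V(C) \cap K_u \setminus \{u\}\bigr) \;\cup\; \{z_e : e \in C,\; u \in e\}.
\]
The subdivision-vertex part blocks every $C$-edge incident to $u$ (in particular $z_{uv}$), and the vertex part removes every other $V(C)$-vertex inside $K_u$; any walk in $G'$ leaving $K_u$ from $u$ must cross one of these, so $u$'s component in $G' \setminus S$ stays inside $K_u$, which misses $Y$. The two parts are disjoint in $V(G')$ and contribute at most $|V(C) \cap K_u|-1$ and $|V(C) \setminus K_u|$ vertices respectively, the latter because every $C$-edge incident to $u$ has its second endpoint in $V(C) \setminus K_u$; hence $|S| \le |V(C)| - 1 \le 6k-1$. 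Since $S$ is disjoint from $K_v$ and from the subdivision vertices of the edges inside $G[K_v]$ (those edges are not in $C$, being internal to a component of $G \setminus C$), the subdivided $v$-to-$y$ path inside $G[K_v]$ survives in $G' \setminus S$, giving $v \in \textup{Reach}_{G'}(Y, S)$.

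The final step invokes the standard pushing lemma for important separators: there is an important $\pair{Y}{\{u\}}$-separator $S^*$ of $G'$ with $|S^*| \le |S| \le 6k$ and $\textup{Reach}_{G'}(Y, S^*) \supseteq \textup{Reach}_{G'}(Y, S)$. Thus $u \notin \textup{Reach}_{G'}(Y, S^*)$ while $v \in \textup{Reach}_{G'}(Y, S^*)$. The only $G'$-neighbors of $z_{uv}$ are $u$ and $v$, which now lie on opposite sides of $S^*$, so $z_{uv}$ itself must be in $S^*$; otherwise $u$ and $v$ would lie in the same component of $G' \setminus S^*$. This gives $z_{uv} \in C_{6k}(Y, \{u\})$ and so $uv \in E_{\textup{rel}}(u)$. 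The symmetric argument, built from an $\pair{X}{\{v\}}$-separator using $K_v \cap X = \emptyset$ and $K_u \cap X \ne \emptyset$, yields $z_{uv} \in C_{6k}(X, \{v\})$ and so $uv \in E_{\textup{rel}}(v)$. The hard part is exactly this last step: an arbitrarily chosen $\pair{Y}{\{u\}}$-separator containing $z_{uv}$ need not push to an important separator still containing $z_{uv}$, and the careful design of $S$ that places $v$ on the source side from the outset is what forces pushing to preserve $z_{uv}$.
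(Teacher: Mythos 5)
Your proof is correct and follows essentially the same strategy as the paper's: use minimality of the cut to get the component structure around $uv$, build an explicit separator in the subdivided graph consisting of cut-endpoints on one side plus the subdivision vertices of the $C$-edges at the relevant endpoint, bound its size by $6k$ via Lemma~\ref{lem:solspan}, and push to an important separator that is then forced to contain $z_{uv}$ because the other endpoint stays reachable. The only cosmetic difference is orientation: the paper separates the $Y$-side endpoint $v$ from $X$ (using $C_{6k}(X,\{v\})$) and appeals to symmetry for the other endpoint, whereas you separate the $X$-side endpoint $u$ from $Y$ (using $C_{6k}(Y,\{u\})$) and restrict the vertex part of the separator to the component $K_u$ rather than to all of $C_X$; both variants are covered by the union $C_{6k}(X,\{\cdot\})\cup C_{6k}(Y,\{\cdot\})$ in the definition of $E_{\textup{rel}}$.
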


\begin{proof} We show that for every $e = uv \in C$, we have $e \in
  E_{\textup{rel}}(v)$; this will imply that $e \in
  E_{\textup{rel}}(u)$ by symmetry, and thus $e \in
  E_{\textup{rel}}$. As $C$ is a minimal $\pair{X}{Y}$-cut,
if we define $U$ to be the set of vertices
  reachable from $X$ in $G \setminus C$, then 
  $X\subseteq U \subseteq V(G)\setminus Y$ holds and $C$ is the set of
  edges with exactly one endpoint in $U$. Let $C_X$ denote the
  endpoints of $C$ inside $U$, and let $C_Y$ denote the endpoints of
  $C$ inside $V(G) \setminus U$. We suppose that $v \in C_Y$, as the
  case $v \in C_X$ is similar. Let us define the vertex set $S$ of $G'$ as
  $S = (C_Y \setminus v) \cup \{ z_e : e \in C \cap E(v) \}$. We make
  the following observations:
\begin{itemize}
\item $S$ is an $\pair{X}{v}$-separator in $G'$, as each $\pair{X}{v}$-path in $G$
  either goes through $C_Y \setminus v$, or goes through an edge of
  $C$ incident to $v$ (note also that $S$ is disjoint from $X\cup
  \{v\}$).
\item $u \in \textup{Reach}_{G'}(X,S)$: as $C$ is a minimal $\pair{X}{Y}$-cut, there has to be an $\pair{X}{u}$-path in $G$ disjoint from $C$, that is, fully contained in $U$, which means that the corresponding path in $G'$ avoids $S$.
\item $|S| \leq 6k$. By Lemma~\ref{lem:solspan}, the edges in $C$ span at most
  $6k$.  Every vertex of $C$ can appear in $C_Y$ or
  can be adjacent to $v$, but not both. Therefore, each vertex spanned
  by $C$ contributes at most one to $S$ and $|S|\leq 6k$ follows.
\end{itemize}
By the definition of important separators, there exists an 
important $\pair{X}{v}$-separator $S'$ in $G'$ such that $\textup{Reach}_{G'}(X,S)
\subseteq \textup{Reach}_{G'}(X,S')$ and $|S'| \leq |S|$. As $z_e$ is
adjacent to $u$ and $v$, as $u \in \textup{Reach}_{G'}(X,S')$ and as
$S'$ is an $\pair{X}{v}$-separator in $G'$, it follows that $z_e \in S'$. Now,
$S' \subseteq C_{6k}(X,\{v\})$ implies that $z_e \in C_{6k}(X,\{v\})$, and we
conclude that $e \in E_{\textup{rel}}(v)$. \qed
\end{proof}

We construct an instance $I'$ of \textsc{Constrained Rank-Cut} from the
instance $I$ of \textsc{Rank-Cut}, by the following random process. Let
$p = \frac{1}{6 k d}=2^{-O(k)}$. We color edges of $E_{\textup{rel}}\setminus M$
with color black with probability $p$, and with color red otherwise. Let $E_b$
denote the set containing the edges of $E_{\textup{rel}}$ colored black,
as well as the edges of $M$. Consider the subgraph $G_b$ of $G$ containing
only the edges in $E_b$ and let partition $P=(V_1,\ldots,V_{\ell'})$ represent the
way the connected components of this subgraph partition $V(G)$ (note that $P$
can have classes that contain only a single vertex). By definition, $G[V_i]$ is connected
for every $i$ and the two endpoints of each edge in $M$ is in the same $V_i$.
Therefore, the \textsc{Constrained Rank-Cut} instance $I' = (G,k,X,Y,M,P)$ is a
valid instance, as it satisfies both (i) and (ii).

\begin{lemma} \label{lem:randomized-reduction} The following two statements hold:
\begin{enumerate}
\item If $I$ is a no-instance of \textsc{Rank-Cut}, then $I'$ is a no-instance of \textsc{Constrained Rank-Cut}.
\item If $I$ is a yes-instance of \textsc{Rank-Cut}, then $I'$ is a yes-instance of \textsc{Constrained Rank-Cut} with probability $2^{-O(k^2)}$.
\end{enumerate}
\end{lemma}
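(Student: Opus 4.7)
Part (1) is immediate and I would dispatch it first: any witness $Z$ for $I'$ gives an edge set $C_Z\subseteq E(G)$ that by construction is an $\pair{X}{Y}$-cut with $r_M(C_Z)\le k$, hence a solution for $I$; the contrapositive yields (1). For part (2), my plan is to fix a minimal solution $C$ of $I$, design a ``good event'' of the random coloring that forces $C$ to align with the partition $P$, and lower-bound its probability. Lemma~\ref{lem:relevant-edges} places $C\subseteq E_{\textup{rel}}$, and Lemma~\ref{lem:solspan} bounds $|V(C\cup M)|\le 6k$. Let $K_1,\ldots,K_s$ denote the connected components of $H=G[C\cup M]$, with vertex sets $V_{K_1},\ldots,V_{K_s}$. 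A direct computation from the definition of $r_M$ gives the identity $r_M(C)=r(G[C\cup M])-r(G[M])$, which lets me pick a spanning forest $\F$ of $H$ that extends some spanning forest of $G[M]$; the ``extra'' edges $\F\setminus M$ lie in $C$ and number at most $r_M(C)\le k$.

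I would then define the good event by two conditions on disjoint edge sets: (a) every edge of $\F\setminus M$ receives color black, and (b) every edge of $E_{\textup{rel}}\setminus(C\cup M)$ with at least one endpoint in $V(H)$ but whose two endpoints do not both lie in a common $V_{K_j}$ receives color red. Since $G[E_{\textup{rel}}]$ has maximum degree $d$ and $|V(H)|\le 6k$, condition (b) constrains at most $6kd$ edges. Independence across disjoint edges then gives
\[ \Pr[\text{good}] \;\ge\; p^{k}\cdot (1-p)^{6kd} \;=\; 2^{-O(k^2)}, \]
using $p=1/(6kd)$, $\log(1/p)=O(k)$, and $(1-1/N)^N=\Omega(1)$.

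To close out, I would verify that the good event implies $I'$ is a yes-instance of \textsc{Constrained Rank-Cut}. Condition (a) makes each $V_{K_j}$ connected in $G_b$ via $\F$; condition (b), together with the fact that $C\cup M$ has no edge leaving any $V_{K_j}$ by construction, ensures no edge of $E_b$ exits $V_{K_j}$. Hence the $V_{K_j}$'s appear verbatim as blocks of $P$, and taking $Z$ to be their indices yields $C_Z\supseteq C$, which is therefore an $\pair{X}{Y}$-cut. The main obstacle I anticipate is showing that the ``chord'' edges $E(G[V_{K_j}])\setminus(C\cup M)$ absorbed into $C_Z$ do not blow up the $M$-rank; I plan to resolve this by noting that $G[C_Z\cup M]$ and $G[C\cup M]$ share the same vertex set and the same connected components (each $V_{K_j}$ is already connected in the latter), hence the same rank, with this equality preserved after contracting $M$, giving $r_M(C_Z)=r_M(C)\le k$. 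It is also worth flagging why we use only $\F\setminus M$ rather than all of $C$ for condition (a): blackening $|C|=O(k^2)$ edges would cost $p^{O(k^2)}=2^{-O(k^3)}$, whereas the spanning-forest trick keeps the black-edge count at $k$ and hence the overall bound at $2^{-O(k^2)}$.
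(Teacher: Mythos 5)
Your proposal is correct and follows essentially the same route as the paper: fix a minimal solution $C$, take a spanning forest of $G[C\cup M]$ maximizing its intersection with $M$ so that only at most $k$ edges need to be forced black, force red the at most $6kd$ relevant edges leaving the components of $G[C\cup M]$, and conclude that these components become blocks of $P$ with unchanged $M$-rank. Your explicit justifications of $|\F\setminus M|\le r_M(C)$ and of $r_M(C_Z)=r_M(C)$ are steps the paper states more tersely, but the argument is the same.
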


\begin{proof} Clearly, if $I'$ has a solution $Z$, then $C_Z$ is a
  solution for instance $I$ of \textsc{Rank-Cut}. Conversely, suppose
  that $I$ has a minimal solution $C \subseteq E(G)$ with $r_{M}(C)
  \leq k$. Let $U_1,\ldots,U_{\ell'}$ denote the vertex sets of the
  connected components of $G[C \cup M]$ (note that this is not
  necessarily a partition of $V(G)$, as it is possible to have
  vertices that are not incident to any edge of $C\cup M$). Let $F$ be
  a spanning forest of $G[C \cup M]$ containing as many edges of $M$
  as possible. Let $B=F \setminus M$; as all these edges are in $C$,
  we have $B \subseteq E_{\textup{rel}}$ by Lemma
  \ref{lem:relevant-edges}, and since we have $r_M(C) \leq k$ it
  follows that $|B| \leq k$. Let $R= \cup_{i = 1}^{\ell'}
  E_{\textup{rel}}(U_i)$, where $E_{\textup{rel}}(U_i)$ denotes the
  set of edges in $E_{\textup{rel}}$ with exactly one endpoint in
  $U_i$. By Lemma~\ref{lem:solspan}, $C\cup M$ spans at most $6k$
  vertices, thus $\sum_{i=1}^{\ell'} |U_i|\le 6k$. As each vertex of
  $V(G)$ has at most $d$ incident edges in $E_{\textup{rel}}$, we have
  $|R| \leq d \sum_{i=1}^{\ell'} |U_i| \le 6 k d=2^{O(k)}$. Now, (i)
  with probability at least $p^k=2^{-O(k^2)}$, every edge of $B$ is
  colored black, (ii) with probability at least $(1-\frac{1}{6 k
    d})^{6 k d} \geq \frac{1}{4}$, every edge of $R$ is colored red
  (indeed, the function $x \mapsto (1-\frac{1}{x})^{x}$ is increasing
  on $[1,+\infty[$ and is thus $\geq \frac{1}{4}$ for $x \geq
  2$). These two events are independent, as they involve disjoint sets
  of edges. Suppose that both events happen. Then, $E_b$ contains all
  edges of $F$, but no edge of $R$. Consider the subgraph $G_b$ of $G$
  containing only the edges in $E_b$ and let partition
  $P=(V_1,\ldots,V_{\ell})$ represent the way the connected
  components of this subgraph partition $V(G)$.
 Then every $U_i$ is one
  class of this partition. Thus $C' = \cup_{i = 1}^{\ell'} E(G[U_i])$
  is a solution for instance $I'$ (as $C' \supseteq C \cup M$ and
  $r_M(C') = r_M(C) \leq k$). We conclude that $I'$ is a yes-instance
  with probability $2^{-O(k^2)}$.  \qed
\end{proof}

From Lemmas \ref{lem:algo-constrained-rank-cut} and \ref{lem:randomized-reduction}, we obtain:

\begin{theorem} \label{thm:rank-cut-randomized} \textsc{Rank-Cut} has a randomized \textup{FPT} algorithm with running time $2^{O(k^2)} n m$.
\end{theorem}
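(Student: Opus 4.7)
The plan is to combine Lemma \ref{lem:algo-constrained-rank-cut} with Lemma \ref{lem:randomized-reduction} via standard probability amplification, taking a little care with running times to match the claimed $2^{O(k^2)}nm$ bound. First I would observe that we may assume $m \geq n-1$: otherwise $G$ is disconnected and we can treat each connected component (restricted to its portions of $X$, $Y$, $M$) separately, so in what follows every bound of the form $f(k)\cdot (n+m)$ is in fact $O(f(k)\cdot m)$.

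The algorithm itself has two phases. In the preprocessing phase, I compute the set $E_{\textup{rel}}$ described in Section~\ref{sec:reduction-constrained-rank-cut}. By Lemma \ref{lem:enum-impseps}, enumerating the important separators of size at most $6k$ in $G'$ for one choice of $\pair{X}{\{u\}}$ or $\pair{Y}{\{u\}}$ takes $O(4^{6k}\cdot 6k\cdot (n+m))$ time, and this is performed for $n$ vertices $u$; after the $m \geq n-1$ simplification this yields a deterministic $2^{O(k)}\cdot nm$ bound for building $E_{\textup{rel}}$. In the randomized phase, I perform one color-flip per edge in $E_{\textup{rel}}\setminus M$ using probability $p = 1/(6kd)$, compute the induced connected-component partition $P$ in $O(n+m)=O(m)$ time via BFS on $G_b$, and invoke Lemma \ref{lem:algo-constrained-rank-cut} on the resulting instance $I' = (G,k,X,Y,M,P)$ of \textsc{Constrained Rank-Cut} in $O(km)$ time. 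If the call returns ``yes'', output ``yes''.

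By Lemma \ref{lem:randomized-reduction}(1) the algorithm never produces false positives, and by Lemma \ref{lem:randomized-reduction}(2) a single trial succeeds on any yes-instance with probability at least $2^{-ck^2}$ for some constant $c$. To amplify, I would repeat the randomized phase $N = 2^{ck^2}\cdot \ln(1/\varepsilon)$ times for an arbitrarily small constant $\varepsilon$ (say $1/2$); by a standard $(1-q)^N \leq e^{-Nq}$ estimate, the failure probability on a yes-instance drops to $\varepsilon$. The total cost is
\begin{equation*}
\underbrace{2^{O(k)}\cdot nm}_{\text{build } E_{\textup{rel}}} \;+\; 2^{O(k^2)}\cdot \underbrace{O(km)}_{\text{one trial}} \;=\; 2^{O(k^2)}\cdot nm.
\end{equation*}

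There is no real obstacle here; the only thing to be careful about is bookkeeping. In particular, one should check (i) that the preprocessing cost $2^{O(k)} n(n+m)$ can be charged as $2^{O(k^2)}nm$, which is why the $m \geq n-1$ reduction is useful, and (ii) that the per-trial cost $O(km)$ really dominates the $O(n+m)$ cost of recomputing the partition $P$ after each re-coloring, so the amplification factor $2^{O(k^2)}$ multiplies only the lower-order term and does not reintroduce an $n$-factor beyond the preprocessing bound.
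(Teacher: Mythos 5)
Your proposal is correct and follows essentially the same route as the paper: build $E_{\textup{rel}}$ once in $2^{O(k)}nm$ time, then repeat the random coloring plus the \textsc{Constrained Rank-Cut} call $2^{O(k^2)}$ times, using Lemma \ref{lem:randomized-reduction} for one-sided error and standard amplification. The only cosmetic difference is that the paper ensures $n+m=O(m)$ by deleting isolated vertices rather than by your per-component argument; both work.
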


\begin{proof} Let $I = (G,k,X,Y,M)$ be an instance of \textsc{Rank-Cut}.
  We first remove all isolated vertices of $G$ in time $O(n+m)$,
  obtaining a graph $G$ for which each connected component has at least
  two vertices, ensuring that $n+m = O(m)$. We then compute the set
  $E_{\textup{rel}}$ in time $O(4^{6 k} k n m)$, and we construct the instance $I'$ of 
  \textsc{Constrained Rank-Cut} by random selection as described above. This instance
  $I'$ can be solved in time $O(k(n+ m))$ by Lemma \ref{lem:algo-constrained-rank-cut}.
  By Lemma~\ref{lem:randomized-reduction}, the probability of a correct
  answer is at least $p_\textup{correct} = 2^{-O(k^2)}$. Thus repeating this process
  $\lceil \frac{1}{p_\textup{correct}}\rceil = 2^{O(k^2)}$ times yields a randomized FPT
  algorithm for \textsc{Rank-Cut} running in time $2^{O(k^2)} n m$ and
  having success probability $(1-p_\textup{correct})^{\frac{1}{p_\textup{correct}}} \geq \frac{1}{4}$.  \qed
\end{proof}

\section{Derandomization}\label{sec:derandomization}

We now derandomize the proofs of Lemma \ref{lem:randomized-reduction}
and Theorem \ref{thm:rank-cut-randomized} using the standard technique
of splitters. Given integers $n,s,t$, an \emph{$(n,s,t)$-splitter} is
a family $\F$ of functions $f : [n] \rightarrow [t]$ such that for
every $S \subseteq [n]$ with $|S| = s$, there is a function of $\F$
that is injective of $S$. Naor et al.~\cite{NSS95} give a
deterministic construction of an $(n,s,s^2)$-splitter of size $O(s^6
\log s \log n)$. We can use this splitter construction to build a
family of colorings of $E_{\textup{rel}}$ to replace the randomized
selection of colors in Lemma~\ref{lem:randomized-reduction}. The proof
of Lemma~\ref{lem:randomized-reduction} analyzed the probability of
the event that a certain set $B$ of edges is colored black and at the
same time a certain set $R$ of edges is colored red.  By setting the
parameters of splitters appropriately, we can ensure that at least one coloring in
the family has this property. The $(n,s,s^2)$-splitter of Naor et
al.~\cite{NSS95} can be constructed in polynomial time, but
unfortunately the exact running time is not stated. Therefore, in the
following theorem, we do not specify the polynomial factors of the
running time.

\begin{theorem} \label{thm:rank-cut-deterministic} \textsc{Rank-Cut} has a deterministic \textup{FPT} algorithm with running time $2^{O(k^2)} n^{O(1)}$.
\end{theorem}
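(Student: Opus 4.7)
The plan is to derandomize the random coloring in the proof of Lemma~\ref{lem:randomized-reduction} by replacing it with a small deterministic family of colorings built from the Naor--Schulman--Srinivasan splitter construction. Recall that the probabilistic analysis succeeds whenever a specific set $B\subseteq E_{\textup{rel}}\setminus M$ with $|B|\le k$ is entirely colored black and a specific disjoint set $R\subseteq E_{\textup{rel}}\setminus M$ with $|R|\le 6kd=2^{O(k)}$ is entirely colored red. So the sole combinatorial task is to construct a family of $\{\textup{black},\textup{red}\}$-colorings of $E_{\textup{rel}}\setminus M$ such that for every such pair $(B,R)$, at least one member of the family colors $B$ black and $R$ red. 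Set $s := k+6kd=2^{O(k)}$; then $|B\cup R|\le s$.

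First I would number the edges of $E_{\textup{rel}}\setminus M$ from $1$ to $N\le m$ and invoke the $(N,s,s^2)$-splitter of Naor et al.~\cite{NSS95}, obtaining a family $\F$ of size $O(s^6 \log s \log N)=2^{O(k)}\,n^{O(1)}$. For every $f\in\F$ and every subset $I\subseteq[s^2]$ with $|I|\le k$, define the coloring $\chi_{f,I}$ that colors an edge $e\in E_{\textup{rel}}\setminus M$ black iff $f(e)\in I$ (and, as in the randomized construction, always places the edges of $M$ into $E_b$). The number of eligible subsets $I$ is at most $\sum_{j=0}^{k}\binom{s^2}{j}\le (k+1)\,s^{2k}=2^{O(k^2)}$, so the whole family of colorings has size $2^{O(k^2)}\,n^{O(1)}$.

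The correctness check is the heart of the argument. Given a minimal solution $C$ of the \textsc{Rank-Cut} instance, extract the sets $B$ and $R$ exactly as in the proof of Lemma~\ref{lem:randomized-reduction} (so $|B|\le k$, $|R|\le 6kd$, and $B\cap R=\emptyset$). By the splitter property applied to any $s$-subset containing $B\cup R$, there exists $f\in\F$ that is injective on $B\cup R$; choose $I:=f(B)$, which has $|I|=|B|\le k$ and is therefore one of the enumerated subsets. Injectivity of $f$ on $B\cup R$ together with $B\cap R=\emptyset$ gives $f(R)\cap I=\emptyset$, so $\chi_{f,I}$ colors every edge of $B$ black and every edge of $R$ red. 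Running, for each coloring in the family, the construction of the partition $P$ followed by the polynomial-time algorithm of Lemma~\ref{lem:algo-constrained-rank-cut}, we recover the solution. The total running time is $2^{O(k^2)}\,n^{O(1)}$ as claimed.

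The main obstacle is purely bookkeeping: one must pick the splitter parameters $(N,s,s^2)$ so that a single pair (function, size-$\le k$ subset of $[s^2]$) simultaneously forces $B$ into the black side and $R$ into the red side. Once one realizes that an injective embedding of $B\cup R$ into $[s^2]$ plus a guess of the image $f(B)$ suffices, the parameter estimates follow: the splitter contributes only a $2^{O(k)}\,n^{O(1)}$ factor, the enumeration of $I$ contributes the dominant $2^{O(k^2)}$ factor, and each call to the constrained solver is polynomial. This completes the derandomization and, combined with Lemmas~\ref{lem:reduction-bcc} and~\ref{lem:reduction-rank-cut}, finishes the proof of Theorem~\ref{thm:algo-bc}.
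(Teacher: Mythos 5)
Your proposal is correct and follows essentially the same route as the paper: construct the $(N,s,s^2)$-splitter of Naor et al., enumerate all pairs of a function $f\in\F$ and a subset of $[s^2]$ of size at most $k$, color an edge black iff its $f$-image lies in the chosen subset, and argue via injectivity on $B\cup R$ that some coloring in the family makes $B$ black and $R$ red. The only (immaterial) difference is your choice $s=k+6kd$ versus the paper's $s=k+4kd$; your value is actually the one consistent with the bound $|R|\le 6kd$ from Lemma~\ref{lem:randomized-reduction}, and either way $s=2^{O(k)}$ so the stated running time is unaffected.
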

\begin{proof}
Consider an instance $I = (G,k,X,Y,M)$ of \textsc{Rank-Cut}. We first construct the set $E_{\textup{rel}}$ as in Section
\ref{sec:reduction-constrained-rank-cut}, and we identify
$E_{\textup{rel}}\setminus M$ with the set $[m']$ where $m' =
|E_{\textup{rel}}\setminus M|$. Let $s = k+4 k d = 2^{O(k)}$. Using the result of
\cite{NSS95}, we construct an $(m',s,s^2)$-splitter $\F$ of size
$O(s^6 \log s \log m)$. Instead of randomly coloring the elements of
$E_{\textup{rel}}\setminus M$, we go through the following
deterministic family of colorings: for every $f \in \F$ and every
subset $U \subseteq [s^2]$ of size at most $k$, we color $e\in
E_{\textup{rel}}\setminus M$ black if and only if $f(e)\in U$.  For
each such coloring, we perform the reduction to \textsc{Constrained
  Rank-Cut} as in Lemma~\ref{lem:randomized-reduction} and then solve
the instance using the algorithm of
Lemma~\ref{lem:algo-constrained-rank-cut}. We return ``yes'' if and
only if at least one of the resulting \textsc{Constrained Rank-Cut}
instances is a yes-instance.

It is clear that if one of the \textsc{Constrained Rank-Cut} instances
is a yes-instance, then $I$ is a yes-instance of
\textsc{Rank-Cut}. Conversely, suppose that $I$ is a yes-instance and
let $B$ and $R$ be the set of edges defined in the proof of
Lemma~\ref{lem:randomized-reduction}. As $|B|+|R|\le s$, there is a
function $f\in \F$ that is injective on $B\cup R$ and there is a set
$U\subseteq [s^2]$ of size at most $k$ such that $b\in B\cup R$
satisfies $b\in B$ if and only $f(b)\in U$.  For this choice of $f$
and $U$, the algorithm considers a coloring that colors $B$ black and
$R$ red. Therefore, the reduction creates a yes-instance of
\textsc{Constrained Rank-Cut}.\qed
\end{proof}

Theorems \ref{thm:rank-cut-randomized} and
\ref{thm:rank-cut-deterministic} respectively give randomized and
deterministic FPT algorithms for \textsc{Rank-Cut}. Combining them
with Lemmas \ref{lem:reduction-bcc} and \ref{lem:reduction-rank-cut},
we obtain (i) a $2^{O(k^2)} n m$ randomized algorithm for
\textsc{Bipartite Contraction}, (ii) a $2^{O(k^2)} n^{O(1)}$
deterministic algorithm \textsc{Bipartite Contraction}. This
establishes Theorem \ref{thm:algo-bc} stated in the introduction.

\section{Concluding remarks}\label{sec:conclusion}

We have obtained a randomized $2^{O(k^2)} n m$ algorithm for
\textsc{Bipartite Contraction}. Can the dependence on $k$ be improved?
It seems plausible that the problem admits a $2^{O(k)} n^{O(1)}$ FPT
algorithm, as such algorithms are known for \textsc{Edge
  Bipartization} \cite{GGHNW06} as well as for other edge contraction
problems \cite{HHLLP11}. We note that important separators are a
common feature of \cite{HHLP11} and of our algorithm, so they could be
the key to further improvements.

Regarding kernelization, Heggernes et al.~\cite{HHLP11} asked whether
\textsc{Bipartite Contraction} has a polynomial kernel. While this
question is still open, it is now known that \textsc{Odd Cycle
  Transversal} (and thus \textsc{Edge Bipartization}) have randomized polynomial
kernels \cite{KW12a}. As \textsc{Edge Bipartization} reduces to
\textsc{Bipartite Contraction}, this raises the question whether the
matroid-based techniques of \cite{KW12a,KW12b} can be applied to the
more general \textsc{Bipartite Contraction} as well. The notion of
rank in the \textsc{Rank-Cut} problem is the same as the notion of
rank in graphic matroids, hence it is possible that the rank
constraint can be incorporated into the arguments of
\cite{KW12a,KW12b} based on linear representation of matroids.

\bibliographystyle{abbrv}
\bibliography{bipartiteContraction}

\end{document}